\documentclass[11pt,letterpaper]{article}

\usepackage[margin=0.82in,headheight=14pt,headsep=10pt]{geometry}
\usepackage[T1]{fontenc}
\usepackage[utf8]{inputenc}
\usepackage{mathtools,amsthm}
\usepackage{newtxtext,newtxmath}
\usepackage{microtype}
\usepackage{booktabs,tabularx,longtable,array}
\usepackage{enumitem}
\usepackage{fancyhdr}
\usepackage{titlesec}
\usepackage{xcolor}
\usepackage{tikz}
\usetikzlibrary{arrows.meta,positioning}
\usepackage[ruled,vlined,linesnumbered]{algorithm2e}
\usepackage[most]{tcolorbox}
\usepackage{pdflscape}
\usepackage[square,authoryear]{natbib}
\usepackage{hyperref}

\definecolor{paperblue}{HTML}{173A5E}
\definecolor{accentblue}{HTML}{2D6A9F}
\definecolor{findingsbg}{HTML}{EAF2F8}

\hypersetup{
  colorlinks=true,
  linkcolor=paperblue,
  citecolor=paperblue,
  urlcolor=paperblue,
  pdftitle={SAT Certificates for the Matrix-Multiplication Challenges over F2},
  pdfauthor={Nick Palladinos}
}
\titleformat{\section}
  {\large\bfseries\color{paperblue}}{\thesection}{0.65em}{}
\titleformat{\subsection}
  {\normalsize\bfseries\color{paperblue}}{\thesubsection}{0.65em}{}
\titlespacing*{\section}{0pt}{1.35em}{0.65em}
\titlespacing*{\subsection}{0pt}{1.1em}{0.5em}

\setlist{nosep}
\setlist[enumerate,1]{leftmargin=2.4em}
\setlist[itemize,1]{leftmargin=1.7em}
\renewcommand{\arraystretch}{1.08}

\newtheorem{proposition}{Proposition}[section]
\newtheorem{lemma}[proposition]{Lemma}
\newtheorem{theorem}[proposition]{Theorem}

\newcommand{\F}{\mathbb{F}}
\newcommand{\GL}{\mathrm{GL}}
\newcommand{\model}{\texttt{.model}}

\newcommand{\benchmarkcommit}{%
  \href{https://github.com/marijnheule/matrix-challenges/commit/150b2e2f519fa9896ad6dbb5103d886aa009f872}%
       {\texttt{150b2e2f}}%
}
\newcommand{\artifactrepository}{%
  \url{https://github.com/palladin/sat-challenges}%
}
\newcommand{\artifactrevision}{%
  \href{https://github.com/palladin/sat-challenges/tree/41d912dcb1bcb3cde4ecfb70ffe4a4eb265b4027}%
       {\texttt{41d912dc}}%
}
\newcommand{\hashlines}[2]{%
  \shortstack[l]{\texttt{#1}\\[-0.15em]\texttt{#2}}%
}

\begin{document}

% --------------------------------------------------------------------------
% Front matter
% --------------------------------------------------------------------------
\begin{center}
  \vspace*{0.35em}
  {\LARGE\bfseries\color{paperblue}
    SAT Certificates for the Matrix-Multiplication\\
    Challenges over \(\F_2\)\par}
  \vspace{0.7em}
  {\Large
    All Ten ``Expected-UNSAT'' Instances Are Satisfiable,\\
    and a Type-3-Free Rank-23 Scheme\par}
  \vspace{1.1em}
  {\large Nick Palladinos\par}
  \vspace{0.45em}
  {30 July 2026
    \(\;\boldsymbol{\cdot}\;\) Version 1.0\par}
\end{center}

\vspace{1.7em}
\begin{center}
  \textbf{Abstract}
\end{center}
\begin{quote}
\small
The matrix-multiplication SAT benchmark of Heule, Kauers, and Seidl asks,
among other tasks, for solutions of ten known-satisfiable rank-23 formulas,
proofs of unsatisfiability for ten formulas expected to be unsatisfiable, and
a rank-23 scheme over \(\F_2\) having a summand with no type-3 monomial. We
give complete satisfying assignments for the 21 CNFs in the repository's
top-level \texttt{challenge1/}, \texttt{challenge2/}, and
\texttt{challenge3/} directories at commit \benchmarkcommit. The principal
finding is that all ten top-level Challenge-2 formulas are satisfiable. A
direct audit shows that their hardcoded type-3 pairings are imposed by
positive unit clauses on the 621 base variables: the formulas require
selected incidences but do not forbid additional type-3 incidences. Starting
from exact 23-summand schemes, we use the \(\GL(3,2)^3\) isotropy action,
cyclic trace symmetry, and perfect matching of transformed summands to
constrained slots to construct witnesses for all ten files. For Challenge 3,
we combine a locked semantic repair with a two-term identity over \(\F_2\) to
obtain a distinguished summand of type-3 count zero. Every semantic
decomposition has zero residual in all 729 Brent equations. The accompanying
DIMACS models assign all 26,541 variables of each formula and satisfy all
2,461,316 clauses across the 21 instances. A separate parser and clause
evaluator rechecks the emitted models. A deterministic one-file Python
reproducer regenerates the 21 certificates from the original CNFs in
approximately nine seconds on the reported test host.
\end{quote}

\noindent\textbf{Keywords:} matrix multiplication; tensor rank; Boolean
Brent equations; SAT certificates; local search; tensor symmetries; finite
fields.

\medskip
\noindent\textbf{Artifact repository:} \artifactrepository\
(version-1.0 snapshot \artifactrevision).

\begin{tcolorbox}[
  enhanced,
  colback=findingsbg,
  colframe=accentblue,
  colbacktitle=accentblue,
  coltitle=white,
  fonttitle=\bfseries,
  title=Certificate-checked findings,
  boxrule=0.6pt,
  arc=2pt,
  left=7pt,right=7pt,top=5pt,bottom=5pt
]
\begin{tabularx}{\linewidth}{>{\bfseries}l X}
Challenge 1  & All 10 top-level satisfiable instances are solved from the
               CNFs by semantic local search.\\
Challenge 2  & All 10 top-level instances described as expected
               unsatisfiable are satisfiable as encoded.\\
Challenge 3  & The top-level CNF is satisfiable; its distinguished summand
               has type-3 count zero.\\
Verification & 21 total models, 557,361 assigned variables, 2,461,316
               checked clauses, zero failures.
\end{tabularx}
\end{tcolorbox}

% --------------------------------------------------------------------------
\section{Introduction}

A bilinear algorithm for multiplying two \(3\times3\) matrices with 23
scalar multiplications has been known since Laderman's construction
\citep{laderman1976}. Whether 22 multiplications suffice in the
noncommutative setting remains open; recent work still identifies 23 as the
lowest known multiplication count \citep{martensson2026}. One systematic
route to this small but difficult problem is to encode rank decompositions of the
matrix-multiplication tensor as polynomial equations and then as Boolean
satisfiability instances. Brent's equations provide the polynomial
formulation \citep{brent1970}; Courtois, Bard, and Hulme and later Heule,
Kauers, and Seidl demonstrated the effectiveness of SAT-based search for
rank-23 schemes \citep{courtois2011,heule2019a,heule2021}.

The benchmark repository accompanying \citet{heule2019a} contains four
challenge families \citep{heule2019b}. Challenge 1 consists of ten formulas
declared satisfiable but difficult for general SAT solvers. Challenge 2
consists of ten formulas with hardcoded pairings of type-3 terms that were
expected to be unsatisfiable. Challenge 3 asks whether a rank-23 scheme can
contain a summand with no type-3 term. Challenge 4 asks for a rank-22 scheme.

All repository-dependent claims in this paper refer to upstream commit
\benchmarkcommit. The audited scope is the 21 CNFs in its top-level
\texttt{challenge1/}, \texttt{challenge2/}, and \texttt{challenge3/}
directories; alternate collections elsewhere in the repository are outside
scope. The results do not address Challenge 4. The main contributions are:
\begin{enumerate}[label=(\roman*)]
  \item a compact semantic local-search procedure that solves all ten
        Challenge-1 instances while searching only the 621 factor variables;
  \item a clause-level audit explaining why the Challenge-2 formulas do not
        encode exact type-3 cores;
  \item explicit constructions, using tensor automorphisms and perfect
        matching, proving all ten Challenge-2 formulas satisfiable;
  \item an explicit \(\F_2\) decomposition into 23 rank-one summands, with
        one type-3-free summand satisfying the Challenge-3 CNF;
  \item complete DIMACS assignments and an end-to-end reproducer that
        independently evaluates every original clause.
\end{enumerate}

The key methodological distinction is between the semantic problem---729
parity equations in 621 meaningful bits---and the much larger Tseitin
encoding used by the CNF. Search and structural transformations are
performed semantically; auxiliary variables are reconstructed only after an
exact tensor decomposition is found.

% --------------------------------------------------------------------------
\section{Rank-23 schemes and the Boolean Brent equations}

\subsection{Tensor representation}

All indices range over \(\{0,1,2\}\). For each summand
\(t\in\{1,\ldots,23\}\), let
\[
  A_t,B_t,C_t\in\F_2^{3\times3}.
\]
The corresponding rank-one tensor is \(A_t\otimes B_t\otimes C_t\). The
matrices form a 23-summand decomposition of the \(3\times3\)
matrix-multiplication tensor when
\begin{equation}
  \bigoplus_{t=1}^{23}
  A_t(i_1,i_2)B_t(j_1,j_2)C_t(k_1,k_2)
  =
  \delta_{i_2,j_1}\delta_{j_2,k_1}\delta_{k_2,i_1}
  \label{eq:brent}
\end{equation}
for all six indices. These are the 729 cubic Brent equations in
\[
  23\cdot3\cdot9=621
\]
base variables \citep{brent1970,heule2019a}.

Equivalently, for input matrices \(X=(x_{ij})\) and \(Y=(y_{jk})\), define
\begin{equation}
  p_t=
  \left(\bigoplus_{i,j:A_t(i,j)=1}x_{ij}\right)
  \left(\bigoplus_{j,k:B_t(j,k)=1}y_{jk}\right).
  \label{eq:bilinear-product}
\end{equation}
Then XOR \(p_t\) into output \(z_{ik}\) whenever \(C_t(k,i)=1\).
Equation~\eqref{eq:brent} states that the resulting nine output bits are
exactly \(XY\) over \(\F_2\). The encoder stores all three coefficient
matrices as row-major 9-bit masks, except that the output mask uses bit
\(3k+i\) for \(z_{ik}\).

\begin{figure}[t]
\centering
\begin{tikzpicture}[
  process/.style={
    draw=accentblue,
    fill=findingsbg,
    rounded corners=2pt,
    minimum width=2.55cm,
    minimum height=1.05cm,
    align=center,
    font=\small
  },
  flow/.style={-{Latex[length=2mm]},draw=paperblue,semithick},
  node distance=0.45cm and 0.48cm
]
  \node[process] (cnf) {Original CNF};
  \node[process, right=of cnf] (extract) {Extract base-\\side constraints};
  \node[process, right=of extract] (construct)
    {Semantic search\\or symmetry\\construction};
  \node[process, below=of construct] (tensor)
    {Check all\\729 tensor\\coordinates};
  \node[process, left=of tensor] (propagate)
    {Unit-propagate\\25,920\\auxiliaries};
  \node[process, left=of propagate] (evaluate)
    {Reparse and\\evaluate\\every clause};
  \draw[flow] (cnf) -- (extract);
  \draw[flow] (extract) -- (construct);
  \draw[flow] (construct) -- (tensor);
  \draw[flow] (tensor) -- (propagate);
  \draw[flow] (propagate) -- (evaluate);
\end{tikzpicture}
\caption{The common pipeline. Only the semantic construction differs among
the three challenge families.}
\label{fig:pipeline}
\end{figure}

\subsection{Type of a monomial}

Following \citet{heule2019a}, the monomial
\(x_{i_1i_2}y_{j_1j_2}z_{k_2k_1}\) may be classified by
\begin{equation}
  m=\delta_{i_2,j_1}+\delta_{j_2,k_1}+\delta_{k_2,i_1}
  \in\{0,1,2,3\}.
  \label{eq:monomial-type}
\end{equation}
A type-3 monomial has all three index matches and is one of the 27 standard
matrix-multiplication monomials. For a summand \(t\), define its integer
type-3 count
\begin{equation}
  \tau_t=\sum_{i,j,k=0}^{2}A_t(i,j)B_t(j,k)C_t(k,i).
  \label{eq:type3-count}
\end{equation}
This count is not reduced modulo 2. Equation~\eqref{eq:brent} requires each
target monomial to occur with odd total parity, but it does not by itself
require a unique occurrence.

\subsection{CNF encoding and certificates}

The published formulas introduce Tseitin variables for conjunctions and XOR
chains \citep{tseitin1983,heule2019a}. Every instance treated here has
26,541 DIMACS variables. Variables 1 through 621 are the base variables in
the order
\[
  (t,A,0\ldots8),\qquad(t,B,0\ldots8),\qquad(t,C,0\ldots8),
\]
while the remaining 25,920 variables are encoding auxiliaries.

A total DIMACS assignment satisfying every clause is a direct SAT
certificate. The supplied models therefore settle satisfiability
independently of the construction used to find them. The semantic
representation remains valuable because it exposes the tensor geometry and
reduces the number of actively searched variables by a factor of more than
40.

% --------------------------------------------------------------------------
\section{End-to-end certificate pipeline}

A state is stored as a Boolean array of shape \(23\times3\times9\). Its
represented tensor is the XOR of the 23 outer products, and its residual is
that tensor XOR the target tensor. The residual is maintained incrementally
under one-bit factor flips.

After a zero-residual state satisfying the direct side constraints is
obtained, the first 621 DIMACS variables are assigned from the state.
Repeated unit propagation over the original clauses determines all remaining
variables in every instance. The resulting assignment is checked against
every clause. Finally, a separate parser reads the written model and reparses
the original CNF before evaluating all clauses again. This last pass is
intentionally independent of the semantic tensor code and of the first
DIMACS parser.

\begin{proposition}[Certificate criterion]
For each instance in the companion artifact repository
(Section~\ref{sec:artifacts}), the emitted \model\ file is a total assignment
of variables \(1,\ldots,26{,}541\) and satisfies every clause of the
corresponding original CNF.
\end{proposition}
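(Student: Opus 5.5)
This statement is a certificate criterion: a finite, decidable claim about 21 concrete files. The proof will therefore be a verification argument that relies on the structure of the pipeline in Figure~\ref{fig:pipeline} rather than on any mathematical derivation. I will prove it separately for each instance and then take the conjunction over the 21 CNFs in the top-level \texttt{challenge1/}, \texttt{challenge2/}, and \texttt{challenge3/} directories at commit \benchmarkcommit. For a fixed instance there are two things to establish. First, \emph{totality}: the \model\ file assigns a value to every variable \(1,\ldots,26{,}541\), with exactly one literal per variable. Second, \emph{satisfaction}: every clause of the original CNF contains at least one literal made true by that assignment.

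\textbf{Totality.} This follows from how the model is built. The semantic state, a zero-residual \(23\times3\times9\) array, fixes variables \(1,\ldots,621\) through the stated ordering \((t,A,\cdot),(t,B,\cdot),(t,C,\cdot)\). The remaining 25,920 auxiliaries are Tseitin variables for conjunctions and XOR chains. Every such auxiliary is defined by clauses that force its value once its inputs are known, so repeated unit propagation assigns it. I will argue this by induction along the definitional order of the Tseitin gates. Each gate's clauses become unit once its operands are fixed, and the operands are either base variables or earlier gates.

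\textbf{Satisfaction.} The semantic argument goes as follows. The zero residual means all 729 Brent equations \eqref{eq:brent} hold. Every gate output equals its defined function, so each XOR-chain root equals the corresponding parity. The Brent-level clauses are therefore satisfied. The side constraints (positive unit clauses on base variables for Challenge~2, and the type-3-free summand condition on \(\tau_t\) from \eqref{eq:type3-count} for Challenge~3) hold by construction of the state. This argument explains why the check succeeds, but it is not what the proposition rests on.

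The actual proof is the mechanical check. The independent parser rereads the original CNF and the emitted model and evaluates every clause, 2,461,316 in total, with zero failures. This evaluation is itself the proof, since it is an exhaustive check of a finite statement.

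I expect the main obstacle to be justifying trust in the checker rather than in the mathematics. The risks are a parser discrepancy with DIMACS conventions (comments, clauses spanning lines, the trailing \(0\)), a mismatch in variable count, or the checker silently defaulting unassigned variables to false. The first safeguard I would use is the separation of concerns the paper describes: the second parser shares no code with the semantic tensor code or the first DIMACS parser. On top of that, I would have the checker explicitly assert that each variable index appears exactly once in the model and that the header counts match the CNF. Anyone can also rerun the deterministic reproducer or an off-the-shelf model checker on the published artifacts (Section~\ref{sec:artifacts}).
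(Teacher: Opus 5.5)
Your proposal is correct and follows the paper's proof. The proposition rests on the independent checker, which rejects conflicting or missing variables, malformed clause termination and wrong header counts, and which accepted all 21 models with zero failures across the 2,461,316 clauses of Table~\ref{tab:verification}; the paper additionally pins the exact files by the digests in Appendix~\ref{app:hashes}. Your semantic totality and satisfaction argument via unit propagation through the Tseitin gates is, as you say, only an explanation of why the check succeeds, mirroring the pipeline of Figure~\ref{fig:pipeline} rather than forming part of the proof.
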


\begin{proof}
The checker rejects conflicting literals, missing variables, malformed
clause termination, and incorrect header counts. It stops at the first
unsatisfied clause. All 21 files were accepted. Aggregate counts are given
in Table~\ref{tab:verification}, and per-file hashes are listed in
Appendix~\ref{app:hashes}. The generated manifest records the same values.
\end{proof}

\begin{algorithm}[t]
\caption{Two-phase semantic residual walk}
\label{alg:semantic-walk}
\DontPrintSemicolon
\SetKwFor{ForEach}{for}{do}{end}
\SetKwFor{While}{while}{do}{end}
Initialize \(x\in\F_2^{23\times3\times9}\) respecting forced bits\;
Compute residual
\(r=\mathcal{T}_{3,3,3}\oplus\sum_t A_t\otimes B_t\otimes C_t\)\;
\ForEach{\emph{phase} in \emph{direct-repair}, \emph{gain/activation}}{
  \If{\emph{phase} is \emph{gain/activation}}{
    restore the best state and residual retained from direct repair\;
  }
  \While{\(r\neq0\) and phase budget remains}{
    choose a violated coordinate \((a,b,c)\)\;
    enumerate the 69 aligned factor flips\;
    discard forced flips; in direct phase discard non-immediate moves\;
    compute make and break counts incrementally\;
    score by break (direct) or break\(-\)make (gain)\;
    sample a move with Boltzmann weight and update \(x,r\)\;
  }
}
\Return{\(x\) if \(r=0\)}\;
\end{algorithm}

% --------------------------------------------------------------------------
\section{Challenge 1: semantic local search}

The first challenge was designed to encourage local search without the
streamlining constraints used in the original discovery process
\citep{heule2019a}. Direct local search on all 26,541 CNF variables obscures
the multilinear structure. Instead, the reference implementation moves only
among the 621 factor bits.

At each step, choose a nonzero residual coordinate \((a,b,c)\). For every
summand \(t\), consider the 69 coordinate-aligned flips
\[
  A_t[a],\qquad B_t[b],\qquad C_t[c].
\]
Flipping \(A_t[a]\), for example, toggles precisely the tensor slice
\(\{a\}\times\operatorname{supp}(B_t)\times\operatorname{supp}(C_t)\).
This makes the exact numbers of corrected and damaged residual coordinates
inexpensive to compute.

The walk has two phases. During direct repair, a move is eligible only if it
immediately toggles the selected defect; it is scored by the number of
currently correct coordinates it breaks. During gain/activation, all 69
aligned moves are eligible, including moves that prepare a term for a later
repair, and the score is
\begin{equation}
  s=\operatorname{break}-\operatorname{make},
  \label{eq:walk-score}
\end{equation}
which is exactly the change in residual Hamming weight. Among candidates,
moves are sampled with weight
\begin{equation}
  \exp\!\left(-1.5(s-s_{\min})\right).
  \label{eq:boltzmann}
\end{equation}

Positive unit clauses on base variables are treated as immutable bits.
Initial free bits are independently set with probability \(1/8\), except for
one instance using \(1/10\).

The fixed schedule is 400,000 direct-repair iterations followed by 600,000
gain/activation iterations for the standalone Challenge-1 runs. At the phase
boundary, the implementation restores the best residual state retained
during direct repair. The second phase is essential: activation moves can
temporarily leave the chosen defect untouched while reshaping a rank-one
term into a coherent future cancellation.

All ten walks terminate at zero tensor residual; the generated total models
satisfy the original formulas. The semantic search is generic in the sense
that the target tensor and direct CNF side constraints determine the
objective. The per-instance presets are only deterministic seeds and initial
densities, not hardcoded solutions.

% --------------------------------------------------------------------------
\section{Challenge 2: all ten expected-UNSAT formulas are SAT}

Challenge 2 was described as ten 23-multiplication formulas with hardcoded
pairings of type-3 terms that were expected to be unsatisfiable
\citep{heule2019a,heule2019b}. The word \emph{expected} is decisive: the
files were conjectural subproblems, not certified UNSAT instances.

\begin{table}[t]
\centering
\caption{Challenge-1 reproduction. ``Support'' is the number of base
variables set to one; \(\sum_t\tau_t\) is the total integer type-3
incidence. Times are semantic-search times from the reported run and exclude
model completion. Instance names omit the common \texttt{MM-23-} prefix.}
\label{tab:challenge1}
\small
\begin{tabular}{lrrrrrr}
\toprule
Instance & Seed & Density & Iterations & Time (s) & Support & \(\sum_t\tau_t\)\\
\midrule
2-2-2-2-3 & 3  & \(1/8\)  & 180,423 & 0.284 & 175 & 29\\
2-2-2-2-A & 2  & \(1/8\)  &   4,418 & 0.002 & 146 & 27\\
2-2-2-2-B & 14 & \(1/8\)  & 156,123 & 0.105 & 162 & 27\\
2-2-2-2-C & 4  & \(1/8\)  &  40,264 & 0.019 & 166 & 27\\
2-2-2-2-D & 0  & \(1/8\)  &  11,977 & 0.007 & 153 & 27\\
2-2-2-2-M & 37 & \(1/8\)  & 529,421 & 0.598 & 159 & 27\\
2-2-2-3-4 & 3  & \(1/8\)  & 200,753 & 0.117 & 167 & 31\\
2-2-2-4-A & 20 & \(1/10\) & 587,039 & 0.608 & 179 & 29\\
2-2-2-4-B & 6  & \(1/8\)  &   6,869 & 0.004 & 155 & 29\\
4-4-4-4-1 & 0  & \(1/8\)  &   8,639 & 0.004 & 153 & 35\\
\bottomrule
\end{tabular}
\end{table}

\subsection{What the CNFs directly constrain}

We audited every clause whose variables all lie in the base range
\(1,\ldots,621\). For each Challenge-2 file:
\begin{itemize}
  \item every base-only clause is a positive unit clause;
  \item there are 77--81 distinct positive units, depending on the file;
  \item there are no negative base-only units; and
  \item there are no non-unit base-only clauses.
\end{itemize}
Here ``base-only'' excludes the mixed clauses defining Tseitin auxiliaries,
which are part of the common Brent encoding. The extra units force selected
entries of selected \(A_t,B_t,C_t\) masks to be present. They do not directly
force other entries to be absent.

\begin{proposition}[Inclusion rather than exactness]
The Challenge-2 side constraints encode required type-3 incidences, not exact
type-3 cores. A satisfying summand may contain additional type-3 monomials
beyond the advertised grouping, provided all Brent parities remain correct.
\end{proposition}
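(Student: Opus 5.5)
The plan is to argue purely from the audit just described, which reduces the claim to a monotonicity statement about the base-side constraints. Fix a Challenge-2 file and split its clauses into two classes. The first class is the common Brent encoding: the Tseitin clauses defining the conjunction and XOR-chain auxiliaries, together with the clauses asserting the 729 parities of \eqref{eq:brent}. The second class is the base-only clauses. By the audit, every base-only clause is a positive unit \(v\) with \(v\le 621\), and there are no negative units or non-unit base-only clauses.

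The first step is to show that the common part is exactly equivalent to \eqref{eq:brent}. More precisely, for every assignment of the 621 base variables, the common clauses extend to a satisfying total assignment if and only if the 23 summands satisfy all Brent equations. I would verify this structurally. Each auxiliary is defined by a standard Tseitin gadget, so unit propagation determines it uniquely from the base bits; this is the same mechanism the pipeline of Fig.~\ref{fig:pipeline} relies on. The final parity clauses then assert \eqref{eq:brent}. Consequently, the only constraints on the \(A_t,B_t,C_t\) masks beyond the Brent equations are the positive units.

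The second step is to read off the semantics of the positive units. A unit \(v\) forces the corresponding entry of some \(A_t\), \(B_t\) or \(C_t\) to be \(1\). The advertised type-3 pairing of a summand \(t\) is a set \(S_t\) of standard monomials \((i,j,k)\), and the units force \(A_t(i,j)=B_t(j,k)=C_t(k,i)=1\) for each \((i,j,k)\in S_t\). Hence the constraint set is upward closed in each mask entry. No clause is falsified by setting an unconstrained base bit to \(1\), except through the Brent parities. It follows that the admissible summands are exactly those with \(S_t\subseteq\{(i,j,k):A_t(i,j)B_t(j,k)C_t(k,i)=1\}\), so \(\tau_t\ge|S_t|\) in \eqref{eq:type3-count} with no upper bound imposed. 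An exact-core encoding would have needed, for each \((i,j,k)\notin S_t\), a clause such as \(\neg A_t(i,j)\vee\neg B_t(j,k)\vee\neg C_t(k,i)\). That clause is base-only and non-unit, and the audit excludes it. This proves the first sentence of the statement.

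For the second sentence it suffices to note that any base assignment meeting the units and \eqref{eq:brent} is feasible, whatever its \(\tau_t\). The cleanest confirmation is an explicit witness. I would exhibit a certified model from Section~\ref{sec:artifacts} in which some summand has \(\tau_t>|S_t|\). Failing that, I would point to the Challenge-1 solutions in Table~\ref{tab:challenge1}, whose totals \(\sum_t\tau_t>27\) show that the Brent parities tolerate surplus type-3 incidences in general.

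The main obstacle is the first step, namely certifying that the auxiliary clauses impose nothing beyond \eqref{eq:brent}. If the encoding were not faithful, mixed clauses could hide an upper bound on type-3 incidences. I would discharge this in one of two ways. The first is to show that the mixed clauses of the Challenge-2 files coincide with those of a Challenge-1 file, whose encoding is the standard one. Failing that, I would fall back on the Certificate criterion: extended models exist for assignments with surplus type-3 incidences, and exhibiting them rules out any hidden exclusion.
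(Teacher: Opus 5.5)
Your proposal is correct and takes essentially the paper's route. The paper rests the proposition on the same base-clause audit: every base-only clause is a positive unit, and the mixed clauses are the common Brent encoding. It then confirms the claim with the Challenge-2 certificates themselves, whose totals \(\sum_t\tau_t\in[41,93]\) exceed the 27 advertised incidences, so some summand necessarily carries surplus type-3 monomials. You go slightly further than the paper by planning to verify that the mixed Tseitin clauses impose nothing beyond the Brent equations, which the paper only asserts. Your primary witness, a Challenge-2 certificate, is the right one; the Challenge-1 fallback does not involve the Challenge-2 units.
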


This distinction opens a large space of dense witnesses. Indeed, the ten
certificates below have total type-3 incidence between 41 and 93, rather than
exactly 27. Every total is odd, as required by adding even numbers of extra
occurrences to the 27 target parities.

\subsection{Tensor automorphisms}

Identify each 9-bit factor mask with a \(3\times3\) matrix. Let
\(P,Q,R\in\GL(3,2)\). Define
\begin{equation}
\begin{aligned}
  A_t' &= P^{\mathsf T}A_tQ^{-\mathsf T},\\
  B_t' &= Q^{\mathsf T}B_tR^{-\mathsf T},\\
  C_t' &= R^{\mathsf T}C_tP^{-\mathsf T}.
\end{aligned}
\label{eq:isotropy-action}
\end{equation}
The action is a concrete part of the isotropy of matrix multiplication
studied by \citet{degroote1978}.

\begin{lemma}
Transformation~\eqref{eq:isotropy-action} preserves the
matrix-multiplication tensor.
\end{lemma}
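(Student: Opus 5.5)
The plan is to replace the 729 coordinate equations~\eqref{eq:brent} by one equivalent trilinear identity that is visibly invariant under the substitution~\eqref{eq:isotropy-action}. For \(M,X\in\F_2^{3\times3}\) write \(\langle M,X\rangle=\sum_{i,j}M(i,j)X(i,j)=\operatorname{tr}(M^{\mathsf T}X)\). Multiply both sides of~\eqref{eq:brent} by \(X(i_1,i_2)Y(j_1,j_2)Z(k_1,k_2)\) and sum over all six indices. The left side becomes \(\sum_t\langle A_t,X\rangle\langle B_t,Y\rangle\langle C_t,Z\rangle\). The Kronecker deltas collapse the right side to \(\sum X(i_1,i_2)Y(i_2,j_2)Z(j_2,i_1)=\operatorname{tr}(XYZ)\). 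The converse also holds: both sides are trilinear forms in \((X,Y,Z)\), and evaluating at triples of elementary matrices \(E_{i_1i_2},E_{j_1j_2},E_{k_1k_2}\) recovers the individual coordinates. So~\eqref{eq:brent} is equivalent to
\[
  \sum_{t=1}^{23}\langle A_t,X\rangle\langle B_t,Y\rangle\langle C_t,Z\rangle=\operatorname{tr}(XYZ)
  \qquad\text{for all }X,Y,Z\in\F_2^{3\times3}.
\]

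Next I would compute how each factor pairing transforms, using the cyclic property of the trace. For \(A\) this gives \(\langle P^{\mathsf T}AQ^{-\mathsf T},X\rangle=\operatorname{tr}(Q^{-1}A^{\mathsf T}PX)=\operatorname{tr}(A^{\mathsf T}PXQ^{-1})=\langle A,PXQ^{-1}\rangle\). In the same way, \(\langle B',Y\rangle=\langle B,QYR^{-1}\rangle\) and \(\langle C',Z\rangle=\langle C,RZP^{-1}\rangle\). Substituting these into the left-hand side for the primed scheme turns it into the original left-hand side evaluated at \((PXQ^{-1},QYR^{-1},RZP^{-1})\). By the hypothesis this equals \(\operatorname{tr}(PXQ^{-1}QYR^{-1}RZP^{-1})=\operatorname{tr}(PXYZP^{-1})=\operatorname{tr}(XYZ)\). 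The primed matrices therefore satisfy the trilinear identity, hence~\eqref{eq:brent}. Since each step is a sum-by-sum equality, the argument also shows the map is invertible: the inverse is the same action with \(P^{-1},Q^{-1},R^{-1}\).

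The only real obstacle is bookkeeping of index conventions, so I would check this carefully before anything else. The target in~\eqref{eq:brent} pairs \(C_t(k_1,k_2)\) with \(\delta_{j_2,k_1}\delta_{k_2,i_1}\), which is what makes the contraction a cyclic trace \(\operatorname{tr}(XYZ)\) rather than \(\operatorname{tr}(XYZ^{\mathsf T})\). The encoder's output-mask convention (bit \(3k+i\) for \(z_{ik}\)) must then be read so that \(C_t\) is exactly the matrix indexed as in~\eqref{eq:brent}. If the convention were off by a transpose, the \(C\)-rule in~\eqref{eq:isotropy-action} would have to be transposed to match, and the trace computation would otherwise go through unchanged.
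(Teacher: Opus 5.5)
Your proposal is correct and follows the paper's proof essentially step for step: the Frobenius-pairing identities $\langle A_t',X\rangle=\langle A_t,PXQ^{-1}\rangle$ and their $B$, $C$ analogues, followed by $\operatorname{tr}(PXQ^{-1}QYR^{-1}RZP^{-1})=\operatorname{tr}(PXYZP^{-1})=\operatorname{tr}(XYZ)$. Your one addition is to justify explicitly, by evaluating at elementary matrices, that the 729 coordinate equations~\eqref{eq:brent} are equivalent to the trilinear identity $\sum_t\langle A_t,X\rangle\langle B_t,Y\rangle\langle C_t,Z\rangle=\operatorname{tr}(XYZ)$, a step the paper takes for granted.
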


\begin{proof}
Let \(\langle U,V\rangle=\operatorname{tr}(U^{\mathsf T}V)\) be the
Frobenius pairing. For arbitrary \(X,Y,Z\in\F_2^{3\times3}\),
\begin{align*}
\langle A_t',X\rangle
 &=\langle A_t,PXQ^{-1}\rangle,\\
\langle B_t',Y\rangle
 &=\langle B_t,QYR^{-1}\rangle,\\
\langle C_t',Z\rangle
 &=\langle C_t,RZP^{-1}\rangle.
\end{align*}
If the original factors decompose
\(\mathcal{T}(X,Y,Z)=\operatorname{tr}(XYZ)\), the transformed factors
therefore evaluate to
\begin{align*}
\mathcal{T}(PXQ^{-1},QYR^{-1},RZP^{-1})
 &=\operatorname{tr}(PXYZP^{-1})\\
 &=\operatorname{tr}(XYZ).
\end{align*}
Thus the complete trilinear tensor, not merely a scalar statistic of each
summand, is preserved.
\end{proof}

The cyclic factor map \((A,B,C)\mapsto(B,C,A)\) is also valid because the
target trilinear form satisfies
\(\mathcal{T}(Z,X,Y)=\operatorname{tr}(ZXY)=\operatorname{tr}(XYZ)\).

\begin{table}[t]
\centering
\caption{Challenge-2 constructions and witness statistics. Instance names
omit \texttt{MM-23-}. ``Cyc.'' denotes the cyclic factor rotation. The
\(P,Q,R\) columns are zero-based indices in the deterministic \(\GL(3,2)\)
enumeration.}
\label{tab:challenge2}
\small
\begin{tabular}{lrrlrrrr}
\toprule
Instance & Source seed & Cyc. & \((P,Q,R)\) & Units & Support
  & \(\sum_t\tau_t\) & \(\max\tau_t\)\\
\midrule
2-2-3-A & 128 & no  & (117, 117, 155) & 81 & 318 & 93 & 10\\
2-2-3-B & 145 & yes & (167, 158, 119) & 81 & 282 & 67 & 15\\
2-4-A   & 87  & no  & (118, 155, 54)  & 81 & 281 & 65 & 11\\
2-4-B   & 15  & no  & (71, 31, 155)   & 81 & 300 & 71 & 15\\
3-3-A   & 49  & no  & (71, 143, 154)  & 81 & 312 & 87 & 10\\
3-3-B   & 8   & no  & (156, 67, 163)  & 81 & 292 & 67 & 9\\
3-3-M   & 15  & no  & (81, 106, 53)   & 77 & 216 & 41 & 6\\
3-3-N   & 106 & no  & (163, 119, 131) & 80 & 319 & 87 & 12\\
5-A     & 20  & no  & (79, 139, 161)  & 81 & 285 & 65 & 15\\
5-B     & 15  & no  & (141, 31, 143)  & 81 & 301 & 75 & 15\\
\bottomrule
\end{tabular}
\end{table}

Since \(\lvert\GL(3,2)\rvert=168\), the direct product contains
\(168^3=4{,}741{,}632\) triples. During discovery this finite action provided
a structured search space. The final reproducer does not repeat an
exhaustive search: it applies the fixed triples in Table~\ref{tab:challenge2}.

\subsection{Assigning transformed terms to constrained slots}

The 23 summands are unordered. After an optional cyclic factor rotation and
the action~\eqref{eq:isotropy-action}, construct a bipartite graph. The left
vertices are the 23 constrained target slots; the right vertices are the 23
transformed source terms. An edge exists when a transformed term contains
every base bit forced in that target slot. A perfect matching gives a
permutation of the terms satisfying all side constraints, while leaving
their tensor sum unchanged.

The implementation uses deterministic augmenting paths, processing the
tightest slots first. Source schemes are freshly found from
\texttt{MM-23-2-2-2-2-A.cnf} by the semantic walk with the seeds in
Table~\ref{tab:challenge2}. The group elements are indexed by enumerating all
9-bit row-major matrices in increasing bit-pattern order and retaining the
168 invertible matrices.

\begin{theorem}[Challenge-2 classification]
Every one of the ten published Challenge-2 CNFs is satisfiable.
\end{theorem}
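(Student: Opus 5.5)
The proof is constructive: for each of the ten files I exhibit a total assignment and then appeal to the Certificate criterion. The plan has four steps, carried out per instance.

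\emph{Step 1: reduce to a semantic problem.} By the clause audit in the subsection on what the CNFs directly constrain, the only base-only clauses in a Challenge-2 file are positive units on variables \(1,\ldots,621\). Every other clause is part of the common Tseitin encoding of the Brent equations~\eqref{eq:brent}. So it suffices to produce a 23-summand \(\F_2\) decomposition of the \(3\times3\) matrix-multiplication tensor, with zero residual on all 729 coordinates, whose factor masks contain every forced bit. Given such a decomposition, I assign the 621 base variables from the state. The Tseitin definitions are functional: each AND or XOR-chain auxiliary is determined by its inputs. Hence unit propagation fixes all 25,920 auxiliaries consistently, and the remaining encoding clauses assert exactly the Brent parities, which hold.

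\emph{Step 2: build a candidate decomposition.} I take a source scheme: an exact rank-23 decomposition obtained by the semantic walk (Algorithm~\ref{alg:semantic-walk}) on \texttt{MM-23-2-2-2-2-A.cnf}, using the seed listed in Table~\ref{tab:challenge2}. I apply the optional cyclic rotation \((A,B,C)\mapsto(B,C,A)\), which is valid because \(\operatorname{tr}(ZXY)=\operatorname{tr}(XYZ)\). I then apply the action~\eqref{eq:isotropy-action} with the tabulated indices \((P,Q,R)\). By the Lemma preceding the theorem, the transformed 23 terms still sum to the target tensor. Since summation is order-independent, any permutation of the terms is also an exact decomposition.

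\emph{Step 3: match terms to slots.} I form the bipartite graph between the 23 constrained slots and the 23 transformed terms. An edge joins a slot and a term when the term's masks contain all units forced in that slot. A perfect matching yields a permutation satisfying every positive unit. Because no negative base units or non-unit base clauses exist, the extra ones in dense terms are harmless. This is the content of Proposition (Inclusion rather than exactness).

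\emph{Step 4: verify, and the main obstacle.} The main obstacle is that the existence of such a matching for the chosen \((P,Q,R)\) is not a structural theorem; it is a finite fact about specific data. I would therefore discharge it computationally. For each instance, the deterministic augmenting-path procedure (tightest slots first) is run and must return a perfect matching. The resulting tensor is checked for zero residual on all 729 coordinates. The completed model is written and independently reparsed against the original CNF, exactly as in the Certificate criterion, and I would report the resulting counts. If a tabulated triple failed, the fallback is to search the \(168^3\) triples and seeds until Hall's condition holds. Since all ten emitted models pass the independent clause evaluator, each CNF has a satisfying assignment, which proves the theorem.
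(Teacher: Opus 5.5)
Your proposal is correct and follows the paper's proof essentially step for step. Both use a seeded semantic-walk source scheme from \texttt{MM-23-2-2-2-2-A.cnf}, the optional cyclic rotation, the tabulated \(\GL(3,2)^3\) triple, a deterministic perfect matching to constrained slots, unit propagation of the auxiliaries, and a final independent evaluation of every original clause. Like the paper, you treat the existence of the matching as a finite computational fact, and the full clause check of the emitted model, not the audit-based reduction in Step~1, is what carries the logical weight.
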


\begin{proof}
For each row of Table~\ref{tab:challenge2}, the stated source seed, optional
cyclic rotation, \(\GL(3,2)^3\) triple, and deterministic perfect matching
produce a zero-residual semantic assignment satisfying all forced base
units. Unit propagation completes it to all 26,541 variables. The
accompanying total model then independently satisfies every original clause.
A single satisfying assignment is a definitive SAT certificate.
\end{proof}

The result settles the truth value of these exact CNF files. It does not
settle a stronger, differently encoded question in which the named type-3
grouping is required to be exact. Such variants would need negative
exclusions or equivalent exact-cardinality constraints.

% --------------------------------------------------------------------------
\section{Challenge 3: a type-3-free summand}

The Challenge-3 formula forbids every type-3 monomial in one distinguished
summand, with no streamlining constraints. At base-variable level this is
represented by the 27 clauses
\begin{equation}
  \neg A(i,j)\vee\neg B(j,k)\vee\neg C(k,i),
  \qquad i,j,k\in\{0,1,2\}.
  \label{eq:no-type3-clauses}
\end{equation}
The order of summands is immaterial; in the decomposition reported here, the
constrained summand is listed first.

\begin{table}[t]
\centering
\caption{Independent full-model verification. ``Assigned variables'' counts
variables across files, not distinct variable names globally.}
\label{tab:verification}
\begin{tabular}{lrrrr}
\toprule
Family & Models & Assigned variables & Clauses evaluated & Failed clauses\\
\midrule
Challenge 1 & 10 & 265,410 & 1,172,098 & 0\\
Challenge 2 & 10 & 265,410 & 1,172,065 & 0\\
Challenge 3 & 1  &  26,541 &   117,153 & 0\\
\midrule
\textbf{Total} & \textbf{21} & \textbf{557,361}
  & \textbf{2,461,316} & \textbf{0}\\
\bottomrule
\end{tabular}
\end{table}

\subsection{Locked repair and a rank-preserving identity}

Start from a freshly searched exact solution of
\texttt{MM-23-2-2-2-3-4.cnf}. Three terms are fixed and locked to the masks
\begin{equation}
  (64,4,511),\qquad(92,89,2),\qquad(438,24,2).
  \label{eq:locked-terms}
\end{equation}
The first mask triple is already present in the seed-3 source; the other two
are changed. Fixing all three leaves a tensor residual of Hamming weight 12.
Running the same semantic walk while forbidding changes to the locked terms
repairs the residual to zero. In the reproduced run, seed 14 required 10,918
iterations.

The last two terms in~\eqref{eq:locked-terms} share the output factor \(C=2\).
Over \(\F_2\),
\begin{equation}
  92\otimes89+438\otimes24
  =92\otimes65+490\otimes24,
  \label{eq:rank-preserving-identity}
\end{equation}
because \(89=65\oplus24\) and \(490=92\oplus438\). Explicitly,
\[
  92\otimes(65+24)+438\otimes24
  =92\otimes65+(92+438)\otimes24.
\]
Tensoring both sides with \(C=2\) replaces two rank-one summands by two
rank-one summands and preserves the complete tensor exactly. A final
permutation places \((490,24,2)\) in the constrained slot.

\begin{theorem}[Type-3-free 23-summand scheme]
There exists a decomposition of the \(3\times3\) matrix-multiplication
tensor over \(\F_2\) into 23 rank-one summands, one of which has type-3 count
zero. The decomposition in Appendix~\ref{app:challenge3} satisfies the
top-level Challenge-3 CNF at commit \benchmarkcommit.
\end{theorem}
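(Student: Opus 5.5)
The proof is constructive: exhibit the 23 mask triples, check that they sum to the matrix-multiplication tensor, and show that the distinguished summand \((490,24,2)\) meets all 27 clauses~\eqref{eq:no-type3-clauses}.

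\textbf{Exact decomposition.} Start from the decomposition produced by the locked repair. It consists of the three locked terms~\eqref{eq:locked-terms} together with 20 free terms found by the semantic walk, and the walk stops only at zero residual. I verify this by recomputing the XOR of all 23 outer products at each of the 729 coordinates of~\eqref{eq:brent} and comparing with the target \(\delta_{i_2,j_1}\delta_{j_2,k_1}\delta_{k_2,i_1}\). Next apply identity~\eqref{eq:rank-preserving-identity}. Since \(89=65\oplus24\) and \(490=92\oplus438\), bilinearity gives \(92\otimes89\otimes2+438\otimes24\otimes2=92\otimes65\otimes2+490\otimes24\otimes2\) exactly. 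The summand count stays 23 and the represented tensor is unchanged. Permuting summands does not change the XOR sum. So the final list, with \((490,24,2)\) first, is again an exact 23-summand decomposition.

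\textbf{Type-3 count of the distinguished summand.} Decode the masks with the paper's conventions. \(A\) and \(B\) are row-major, bit \(3i+j\), and the output mask uses bit \(3k+i\) for \(z_{ik}\). Then:
\begin{itemize}
  \item \(C=2\) has only bit 1 set, so \(k=0,\ i=1\), and \(C\) is supported on the single index pair \((k,i)=(0,1)\).
  \item \(B=24=8+16\) has bits 3 and 4, i.e.\ entries \((j,k)=(1,0)\) and \((1,1)\).
  \item \(490=256+128+64+32+8+2\) has bits \(1,3,5,6,7,8\). Row \(i=1\) of \(A\) consists of bits \(3,4,5\), so its support there is \(\{(1,0),(1,2)\}\).
\end{itemize}
A term of \(\tau\) in~\eqref{eq:type3-count} needs \(C(k,i)=1\), which forces \(k=0,\ i=1\). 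It then needs \(B(j,0)=1\), which forces \(j=1\). Finally it needs \(A(1,1)=1\), but bit 4 of 490 is zero. Hence \(\tau=0\), and every clause of~\eqref{eq:no-type3-clauses} holds for this summand.

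\textbf{From the decomposition to the CNF.} Assign variables \(1,\ldots,621\) from the 23 masks in the stated order. Any other base-only clauses of the Challenge-3 file are handled by the same audit used for Challenge 2. The 25,920 auxiliaries are fixed by unit propagation, and the resulting total model is checked clause by clause. The Certificate criterion proposition then gives satisfiability of the top-level Challenge-3 CNF.

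\textbf{Main obstacle.} The main obstacle is not conceptual. The only nontrivial claim is that the locked repair really reaches zero residual. Its evidence is computational: the explicit 23 triples in Appendix~\ref{app:challenge3} and the 729-coordinate check. The decoding of the output-mask convention needs care, since mixing up \(3k+i\) with \(3i+k\) could make \(\tau\) appear nonzero.
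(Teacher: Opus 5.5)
Your proposal is correct and follows the paper's proof. Exactness rests on the explicit 729-coordinate check of the 23 listed masks; the locked repair and identity~\eqref{eq:rank-preserving-identity} only explain how those masks were obtained. Your decoding, in which \(C=2\) forces \((k,i)=(0,1)\), then \(B(j,0)=1\) forces \(j=1\), and then \(A(1,1)\) is absent, is exactly the paper's argument for \(\tau_1=0\). As in the paper, the CNF claim rests on the clause-by-clause check of the completed total model, so your appeal to ``the same audit used for Challenge 2'' is unnecessary.
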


\begin{proof}
The complete list of 23 masks has zero residual in all 729 equations. For
the first term,
\begin{align*}
A_1=490 &: x_{01}+x_{10}+x_{12}+x_{20}+x_{21}+x_{22},\\
B_1=24  &: y_{10}+y_{11},\\
C_1=2   &: z_{10}.
\end{align*}
The potential standard monomial through \(y_{10}\) and \(z_{10}\) would
require \(x_{11}\), which is absent. The variable \(y_{11}\) has
output-column index 1, whereas the only selected output is column 0. Hence
Equation~\eqref{eq:type3-count} gives \(\tau_1=0\). The full DIMACS model
satisfies all 117,153 clauses of the original file.
\end{proof}

This construction is more informative than an opaque SAT assignment: the
short identity~\eqref{eq:rank-preserving-identity} isolates the structural
operation that creates the forbidden-free summand while keeping the
decomposition length at 23.

% --------------------------------------------------------------------------
\section{Verification and reproducibility}

\subsection{Aggregate verification}

Every semantic state also passes a direct 729-coordinate tensor check.
Challenge 1 has no zero-type summands in the supplied witnesses. Challenge 2
has no zero-type summands but substantially more than 27 total type-3
incidences. Challenge 3 has exactly one zero-type summand, with total type-3
incidence 31.

\subsection{Reproduction command and environment}

The one-file reproducer requires only the original benchmark tree, NumPy,
and Numba:
\begin{verbatim}
git -C /path/to/matrix-challenges checkout --detach \
  150b2e2f519fa9896ad6dbb5103d886aa009f872
python -m pip install numpy==2.3.5 numba==0.65.1
python mm23_all_reproduce.py /path/to/matrix-challenges -o models
\end{verbatim}
It does not read saved semantic states, checkpoints, or precomputed DIMACS
models. It searches the Challenge-1 source schemes, performs the fixed
Challenge-2 transformations and matchings, constructs the Challenge-3
witness, completes all auxiliary assignments, and verifies every written
model.

An author-reported full run used Python 3.13.5, NumPy 2.3.5, Numba 0.65.1,
and five virtual CPU cores reported as an AMD EPYC 9V74. The program reported
7.872 seconds; process wall time was 8.95 seconds with a maximum resident set
size of approximately 514 MiB. No raw log from that run is retained. The
measurements include Numba compilation, CNF parsing, semantic construction,
propagation, complete clause evaluation, and file output. They are intended
as reproducibility data rather than a comparative performance benchmark.

\subsection{Artifact availability and hashes}
\label{sec:artifacts}

The code, complete DIMACS models, paper source, and supporting decompositions
are publicly available in the companion artifact repository
\citep{palladinos2026artifacts}: \artifactrepository. The version-1.0
artifact used for this paper is fixed by revision \artifactrevision. Its
principal files are:
\begin{itemize}
  \item \texttt{mm23\_all\_reproduce.py}: the end-to-end reproducer;
  \item \texttt{mm23\_complete\_reproduction/}: 21 \model\
        certificates and 21 human-readable \texttt{.md} decompositions;
  \item \texttt{paper/}: the editable \LaTeX\ source, references, and compiled
        paper.
\end{itemize}
For each CNF, the reproducer emits a \model\ certificate and a structured
\texttt{.terms.json} decomposition; it also emits \texttt{manifest.json},
containing per-instance dimensions, semantic statistics, CNF hashes, and
model hashes. The generated \texttt{.terms.json} files and
\texttt{manifest.json} are written to the selected output directory and are
not checked in. The distributed \texttt{.md} files are human-readable
presentations of the same 23 mask triples, not additional reproducer output.
The SHA-256 digest of the accompanying reproducer is
\begin{center}
\small
\begin{tabular}{r@{\;:\;}l}
\texttt{mm23\_all\_reproduce.py}
  & \hashlines{f06891802f0a7f3e4f3ba7314b813ef5}
              {67f3a189a603a24fa8200088901a6588}
\end{tabular}
\end{center}
The complete per-model SHA-256 list appears in Appendix~\ref{app:hashes}.

% --------------------------------------------------------------------------
\section{Discussion}

\subsection{Why the Challenge-2 outcome is plausible in retrospect}

A filename such as \texttt{MM-23-2-2-3-A.cnf} naturally suggests an exact
combinatorial core. The actual Boolean formula is the authoritative object,
however. Positive units assert that chosen type-3 incidences occur, while the
parity equations permit additional incidences in even multiplicities. Dense
orbit representatives therefore have room to meet the required units
without preserving the sparse intended pattern.

The result does not contradict the reported weakness of complete SAT solvers
on these encodings. Generic CDCL search can still perform poorly even when a
structured satisfying assignment exists. The successful route exploits
information that a generic solver does not automatically expose: the
621-variable tensor semantics, the isotropy group, cyclic trace invariance,
and the irrelevance of summand order.

\subsection{Implications for benchmark design}

When a benchmark is intended to test the extendability of an exact core, the
CNF should distinguish at least three notions:
\begin{enumerate}
  \item required incidences;
  \item forbidden incidences; and
  \item parity constraints implied by the tensor equations.
\end{enumerate}
For exact core patterns, positive units alone are insufficient. Negative unit
clauses, support equalities, or exact-cardinality constraints can close the
gap. Publishing a small side-constraint manifest alongside the generated
CNF would also make the intended semantics easier to audit.

\subsection{Scope and limitations}

The certificates establish statements over \(\F_2\) and settle the exact
supplied CNF files. They do not automatically provide noncommutative
algorithms over arbitrary coefficient rings; lifting a Boolean scheme is a
separate algebraic problem \citep{heule2021}. The certificates make no
determination about rank 22, the unsatisfiability of exact-core variants, or
equivalence classes of the 21 schemes. We also make no historical priority
claim beyond reporting the accompanying, independently checkable artifacts.

The stochastic component is made deterministic by fixed seeds, but a
different numerical or compiler environment could in principle alter a
random trajectory. This does not affect certificate validity: the supplied
total assignments can always be checked directly, and the construction
metadata gives a second route to the Challenge-2 witnesses.

% --------------------------------------------------------------------------
\section{Conclusion}

The published matrix-multiplication benchmark contains a sharper and more
surprising result than its original Challenge-2 expectation suggested. All
ten proposed UNSAT subproblems are SAT as encoded. The decisive observations
are that their direct core constraints are positive inclusions and that exact
23-summand schemes can be transported through a large, explicit
tensor-symmetry group before their summands are matched to constrained slots.
The same semantic viewpoint solves all ten Challenge-1 formulas and, combined
with a small rank-preserving identity, yields a Challenge-3 scheme with a
type-3-free summand.

The final evidence is deliberately simple: 21 complete assignments, each
assigning all 26,541 DIMACS variables, and zero failed clauses across
2,461,316 evaluations. These certificates resolve the 21 top-level
Challenge-1--3 CNFs at commit \benchmarkcommit\ and leave the rank-22
Challenge 4 as the remaining benchmark frontier.

\section*{Acknowledgments}

This work would not have been possible without the computational exploration
provided by OpenAI GPT-5.6 Sol. The mathematical and SAT claims in this paper
are grounded in the accompanying independently checkable models. The author
thanks Marijn J. H. Heule, Manuel Kauers, and Martina Seidl for creating and
publishing the benchmark and its encoding.

% --------------------------------------------------------------------------
% Appendices
% --------------------------------------------------------------------------
\clearpage
\appendix

\begin{landscape}
\section{Complete Challenge-3 decomposition}
\label{app:challenge3}

For each row, compute \(p_t=\alpha_t(X)\beta_t(Y)\) and XOR it into every
listed output. The output-mask convention is
\(C_t(k,i)\leftrightarrow z_{ik}\). The masks are decimal encodings of the
row-major 9-bit coefficient matrices.

\begingroup
\scriptsize
\setlength{\tabcolsep}{3.5pt}
\renewcommand{\arraystretch}{1.18}
\begin{longtable}{
  r r r r
  >{\raggedright\arraybackslash}p{3.25cm}
  >{\raggedright\arraybackslash}p{3.05cm}
  >{\raggedright\arraybackslash}p{3.55cm}
  r
}
\caption{The 23-summand Challenge-3 scheme.}
\label{tab:challenge3-scheme}\\
\toprule
\(t\) & \(A\) & \(B\) & \(C\) & \(\alpha_t(X)\) & \(\beta_t(Y)\)
  & Outputs & \(\tau_t\)\\
\midrule
\endfirsthead
\multicolumn{8}{c}{\itshape Table \thetable\ continued}\\
\toprule
\(t\) & \(A\) & \(B\) & \(C\) & \(\alpha_t(X)\) & \(\beta_t(Y)\)
  & Outputs & \(\tau_t\)\\
\midrule
\endhead
\bottomrule
\endfoot
1 & 490 & 24 & 2
 & \(x_{01}+x_{10}+x_{12}+x_{20}+x_{21}+x_{22}\)
 & \(y_{10}+y_{11}\)
 & \(z_{10}\) & 0\\
2 & 384 & 64 & 262
 & \(x_{21}+x_{22}\)
 & \(y_{20}\)
 & \(z_{10}+z_{20}+z_{22}\) & 1\\
3 & 128 & 72 & 39
 & \(x_{21}\)
 & \(y_{10}+y_{20}\)
 & \(z_{00}+z_{10}+z_{20}+z_{21}\) & 1\\
4 & 16 & 432 & 24
 & \(x_{11}\)
 & \(y_{11}+y_{12}+y_{21}+y_{22}\)
 & \(z_{01}+z_{11}\) & 1\\
5 & 92 & 130 & 144
 & \(x_{02}+x_{10}+x_{11}+x_{20}\)
 & \(y_{01}+y_{21}\)
 & \(z_{11}+z_{12}\) & 1\\
6 & 92 & 65 & 2
 & \(x_{02}+x_{10}+x_{11}+x_{20}\)
 & \(y_{00}+y_{20}\)
 & \(z_{10}\) & 1\\
7 & 56 & 6 & 48
 & \(x_{10}+x_{11}+x_{12}\)
 & \(y_{01}+y_{02}\)
 & \(z_{11}+z_{21}\) & 1\\
8 & 120 & 134 & 176
 & \(x_{10}+x_{11}+x_{12}+x_{20}\)
 & \(y_{01}+y_{02}+y_{21}\)
 & \(z_{11}+z_{21}+z_{12}\) & 4\\
9 & 376 & 216 & 32
 & \(x_{10}+x_{11}+x_{12}+x_{20}+x_{22}\)
 & \(y_{10}+y_{11}+y_{20}+y_{21}\)
 & \(z_{21}\) & 1\\
10 & 504 & 88 & 34
 & \(x_{10}+x_{11}+x_{12}+x_{20}+x_{21}+x_{22}\)
 & \(y_{10}+y_{11}+y_{20}\)
 & \(z_{10}+z_{21}\) & 3\\
11 & 64 & 4 & 511
 & \(x_{20}\)
 & \(y_{02}\)
 & \(z_{00}+z_{10}+z_{20}+z_{01}+z_{11}+z_{21}+z_{02}+z_{12}+z_{22}\)
 & 1\\
12 & 4 & 288 & 472
 & \(x_{02}\)
 & \(y_{12}+y_{22}\)
 & \(z_{01}+z_{11}+z_{02}+z_{12}+z_{22}\) & 1\\
\pagebreak
13 & 260 & 320 & 256
 & \(x_{02}+x_{22}\)
 & \(y_{20}+y_{22}\)
 & \(z_{22}\) & 1\\
14 & 132 & 96 & 259
 & \(x_{02}+x_{21}\)
 & \(y_{12}+y_{20}\)
 & \(z_{00}+z_{10}+z_{22}\) & 2\\
15 & 130 & 40 & 3
 & \(x_{01}+x_{21}\)
 & \(y_{10}+y_{12}\)
 & \(z_{00}+z_{10}\) & 1\\
16 & 18 & 17 & 10
 & \(x_{01}+x_{11}\)
 & \(y_{00}+y_{11}\)
 & \(z_{10}+z_{01}\) & 1\\
17 & 6 & 33 & 67
 & \(x_{01}+x_{02}\)
 & \(y_{00}+y_{12}\)
 & \(z_{00}+z_{10}+z_{02}\) & 1\\
18 & 20 & 418 & 152
 & \(x_{02}+x_{11}\)
 & \(y_{01}+y_{12}+y_{21}+y_{22}\)
 & \(z_{01}+z_{11}+z_{12}\) & 2\\
19 & 48 & 390 & 128
 & \(x_{11}+x_{12}\)
 & \(y_{01}+y_{02}+y_{21}+y_{22}\)
 & \(z_{12}\) & 1\\
20 & 1 & 5 & 15
 & \(x_{00}\)
 & \(y_{00}+y_{02}\)
 & \(z_{00}+z_{10}+z_{20}+z_{01}\) & 1\\
21 & 71 & 1 & 73
 & \(x_{00}+x_{01}+x_{02}+x_{20}\)
 & \(y_{00}\)
 & \(z_{00}+z_{01}+z_{02}\) & 1\\
22 & 65 & 5 & 79
 & \(x_{00}+x_{20}\)
 & \(y_{00}+y_{02}\)
 & \(z_{00}+z_{10}+z_{20}+z_{01}+z_{02}\) & 3\\
23 & 21 & 3 & 8
 & \(x_{00}+x_{02}+x_{11}\)
 & \(y_{00}+y_{01}\)
 & \(z_{01}\) & 1\\
\end{longtable}
\endgroup
\end{landscape}

\section{Challenge-2 certificate metadata}
\label{app:challenge2-metadata}

Each Challenge-2 model is reproduced from a semantic solution of
\texttt{MM-23-2-2-2-2-A.cnf}. The construction table in
Table~\ref{tab:challenge2} is sufficient together with the deterministic
definitions in the reference script: source search uses density \(1/8\) and
budgets 150,000 direct plus 350,000 gain steps; the optional outer map is
either identity or \((A,B,C)\mapsto(B,C,A)\); \(\GL(3,2)\) is enumerated by
increasing 9-bit row-major encoding; and target terms are matched by
deterministic augmenting paths. The resulting target-slot permutations are
recorded in the script-generated manifest.

\section{Complete model SHA-256 digests}
\label{app:hashes}

The following hashes identify the exact total assignments distributed in the
companion artifact repository.

\begin{center}
\footnotesize
\begin{tabular}{l l}
\toprule
Instance & Model SHA-256\\
\midrule
\texttt{MM-23-2-2-2-2-3}
  & \hashlines{2f3b4a8a48c405794588eceea5d83ad9}
              {8fc778e5c9a1b4413b17fdc5ea079aea}\\
\texttt{MM-23-2-2-2-2-A}
  & \hashlines{a2c9cd2f739867eac8f4df65e3fae5c1}
              {4a0ed4e8c8485d212879b24306405441}\\
\texttt{MM-23-2-2-2-2-B}
  & \hashlines{5d8e5886630630cd5d441f4aaca9a1c3}
              {bc02c8f53410327384f6a625af291926}\\
\texttt{MM-23-2-2-2-2-C}
  & \hashlines{6e9f960096087129a5ac023053a5bf25}
              {61a25ed06adb3e886b0b61fb77c9f5d0}\\
\texttt{MM-23-2-2-2-2-D}
  & \hashlines{8609df116787cdbadb1f3a29ca2e45f3}
              {d0a0d848097bfd4705e53b6ad0f03349}\\
\texttt{MM-23-2-2-2-2-M}
  & \hashlines{0e2011467202b739b381474b3bcdd21d}
              {5280d3ee8205641030d1cdadf8c45554}\\
\texttt{MM-23-2-2-2-3-4}
  & \hashlines{e13b3e3a19ac8204ba5282faab75d47e}
              {a097d9752d7ae29b25efeeb7ae34132b}\\
\texttt{MM-23-2-2-2-4-A}
  & \hashlines{0525d874f740cac01fb992197d554a37}
              {be11ad4bea8409218ea8d666dbc295d1}\\
\texttt{MM-23-2-2-2-4-B}
  & \hashlines{5d86abfe9cb889beb0861f02232484fd}
              {69ca01e0a3e8c440f78074a3b1a17c81}\\
\texttt{MM-23-4-4-4-4-1}
  & \hashlines{42d7c0b6ea1ca04d6b131e82e206ad7e}
              {09c9023687e85eb944c48eb77c6433e3}\\
\texttt{MM-23-2-2-3-A}
  & \hashlines{37bfafa7e8e81057a765f33ec1485e99}
              {c7a44386bd56b14fbd49ef555de6884e}\\
\texttt{MM-23-2-2-3-B}
  & \hashlines{de0d969b6b76464341f5ef5ba4246aac}
              {1ca66bf2ed29a725a41863f020d4b1d8}\\
\texttt{MM-23-2-4-A}
  & \hashlines{a7d4f2176c55f5a7e02d9bf15414c668}
              {45338da0855626c403c448fac34c31c0}\\
\texttt{MM-23-2-4-B}
  & \hashlines{082032bcd7092ba8df70e187720a4827}
              {10775e51fbcfe9084e5ed55ae759709c}\\
\texttt{MM-23-3-3-A}
  & \hashlines{365899d6bb12175b5ed866db4abc3d7e}
              {d6f028134becfa445442d09eff71903d}\\
\texttt{MM-23-3-3-B}
  & \hashlines{50cde7ea65310efa0eeefab326a28569}
              {4645e7da4b079ec5ad8d2fcbbbf1397f}\\
\texttt{MM-23-3-3-M}
  & \hashlines{6fcac8ded7bc52803303b0a934280318}
              {a3cf2d67ae2705606622a8dbb702523b}\\
\texttt{MM-23-3-3-N}
  & \hashlines{e91115ee8ce0e1970b81bcca37ec7588}
              {84cf5343a21f18203cd2f0b0611e8c7b}\\
\texttt{MM-23-5-A}
  & \hashlines{00b236cc65fdf3a3f8f794fa2b5b6a8f}
              {b727759622f798a03bc930150a9167ba}\\
\texttt{MM-23-5-B}
  & \hashlines{eff4da53e1dfd0f9464d2b7a2a88bca6}
              {503de103fe00b1011d12b748d7ebe9c6}\\
\texttt{MM-23-no-type3}
  & \hashlines{4c91a268d12675d3233e2f71d086ae96}
              {ab262154efdceb95609e5c59cfa08351}\\
\bottomrule
\end{tabular}
\end{center}

% --------------------------------------------------------------------------
% References are kept in this file so that a single editable source compiles
% with pdfLaTeX alone.  The optional paper/references.bib contains the same
% records for users who prefer a BibTeX workflow.
% --------------------------------------------------------------------------

\end{document}